\documentclass[onecolumn]{IEEEtran}
\usepackage{calc}
\usepackage[T1]{fontenc}

\usepackage[overload]{empheq}
\usepackage{cleveref}
\usepackage{multirow}
\usepackage{cite}
\usepackage{graphicx,subfigure}
\usepackage{psfrag}
\usepackage{amsmath,amssymb}
\usepackage{color}
\usepackage{tikz}
\usepackage{cleveref}
\usepackage{empheq}
\usepackage{setspace}
\doublespacing

\interdisplaylinepenalty=2500

\newenvironment{proof}{\begin{IEEEproof}}{\end{IEEEproof}}






\DeclareMathOperator*{\defeq}{\triangleq}



\newtheorem{proposition}{Proposition}
 









\newcommand{\bit}{\begin{itemize}}
\newcommand{\eit}{\end{itemize}}

\newcommand{\bc}{\begin{center}}
\newcommand{\ec}{\end{center}}

\newcommand{\ba}{\begin{array}}
\newcommand{\ea}{\end{array}}

\newcommand{\beq}{\begin{equation}}
\newcommand{\eeq}{\end{equation}}

\newcommand{\beqn}{\begin{equation*}}
\newcommand{\eeqn}{\end{equation*}}

\newcommand{\bean}{\begin{eqnarray*}}
\newcommand{\eean}{\end{eqnarray*}}
\newcommand{\bea}{\begin{eqnarray}}
\newcommand{\eea}{\end{eqnarray}}

\def\F{\mathbb{F}}


\def\av{\boldsymbol{a}}

\def\cv{\boldsymbol{c}}

\def\xv{\boldsymbol{x}}
\def\yv{\boldsymbol{y}}




\newtheorem{remark}{Remark}



\pagestyle{empty}
\makeatletter
\def\blfootnote{\gdef\@thefnmark{}\@footnotetext}
\makeatother

\begin{document}
\sloppy

\title{Improved Latency-Communication Trade-Off for Map-Shuffle-Reduce Systems with Stragglers}
\author{ Jingjing Zhang and Osvaldo Simeone  
	}  
\maketitle

\thispagestyle{empty}

\begin{abstract}
In a distributed computing system operating according to the map-shuffle-reduce framework, coding data prior to storage can be useful both to reduce the latency \blfootnote{The authors are with the Department of Informatics, King's College London, London, UK (emails: jingjing.1.zhang@kcl.ac.uk, osvaldo.simeone@kcl.ac.uk).}caused by straggling servers and to decrease the inter-server communication load in the shuffling phase. In prior work, a concatenated coding scheme was proposed for a matrix multiplication task. In this scheme, the outer Maximum Distance Separable (MDS) code is leveraged to correct erasures caused by stragglers, while the inner repetition code is used to improve the communication efficiency in the shuffling phase by means of coded multicasting. In this work, it is demonstrated that it is possible to leverage the redundancy created by repetition coding in order to increase the rate of the outer MDS code and hence to increase the multicasting opportunities in the shuffling phase. As a result, the proposed approach is shown to improve over the best known latency-communication overhead trade-off.
\end{abstract}

\begin{IEEEkeywords}
Distributed computing, map-shuffle-reduce, coded multicasting, stragglers, coding.
\end{IEEEkeywords}

\section{introduction}

Consider the distributed computing system shown in Fig.~\ref{fig:model}, in which $K$ servers are tasked with the computation of the matrix product $\mathbf{Y}=\mathbf{A X}$, where data matrix $\mathbf{X}$ is available at all servers, while the matrix $\mathbf{A}$ can be partially stored at each server. In particular, each server can store information about $\mathbf{A}$  up to a fraction $\mu \leq 1$ of its size. As a result, servers need to collaborate in order to compute the output $\mathbf{Y}$ by communicating over a shared multicast channel. Furthermore, servers are typically subject to random computing times, and hence measures should be taken in order to ensure correct distributed computation even in the presence of a given number of straggling servers \cite{DB:13}.

A standard framework to implement this operation is \emph{map-shuffle-reduce} \cite{LAA:15,LAA:18}. Accordingly, in the \emph{map phase}, the servers compute Intermediate Vaues (IVs) that depend on the available information about $\mathbf{A}$ and $\mathbf{X}$. Generally, only a subset of $q$ servers is able to complete this phase within a desired latency. In the \emph{shuffling phase}, functions of the IVs are exchanged among the $q$ non-straggling servers. Finally, in the \emph{reduce phase}, the non-straggling servers can collectively produce all the columns in $\mathbf{Y}$, with each server producing a given subset (see Fig.~\ref{fig:model}).

The performance of the system is characterized by a trade-off between computational latency --- the time elapsed during the map phase --- and communication overhead --- the amount of information exchanged during the shuffling phase \cite{MMRA:16}. In fact, waiting for a longer time for more servers to compute their map operations reduces the need for communication of IVs during the shuffling phase. This trade-off depends on the available storage capacity $\mu$ at the servers, which limits the capability of the servers to compute IVs and to withstand erasures due to stragglers \cite{LMA:16}.   

In prior work \cite{LMA:16}, a concatenated coding scheme was proposed for the described matrix multiplication task (see Fig.~\ref{fig:redundancy}). In this scheme, the outer Maximum Distance Separable (MDS) code is leveraged to correct erasures caused by stragglers, while the inner repetition code is improve the communication efficiency in the shuffling phase by means of coded multicasting. In this paper, it is demonstrated that it is possible to leverage the redundancy created by repetition coding in order to increase the rate of the outer MDS code and hence to enhance the multicasting opportunities in the shuffling phase. As a result, the proposed approach is shown to improve over the best known latency-communication overhead trade-off (see Fig.~\ref{fig:com} for a preview).

The rest of the paper is organized as follows. Section~\ref{sec:2} describes a distributed computing model with straggling servers for map-shuffle-reduce systems and reviews the unified coding scheme studied in \cite{LMA:16}. In Section~\ref{sec:3}, we propose an improved concatenated coding scheme. In particular, an illustrative example is presented first, and then the achievable communication load is characterized, followed by the corresponding general scheme. Section~\ref{sec:4} concludes this work. 

\textbf{Notation:}
For $a\in \mathbb{N}^+, b\in\mathbb{Z}$, we define $\binom{a}{b}=0$ when $a<b$ or $b<0$, and we define $\binom{a}{0}=1$. For $K,P\in \mathbb{N}^+$ with $K\leq P$, we define the set $[P]\defeq\{1,2,\cdots,P\}$, and the set $[K : P]\defeq \{K, K + 1, \cdots, P\}$. For a set $\mathcal{A}$, $|\mathcal{A}|$ represents the cardinality. We also have $0/0=0$. Matrices and vectors will be denoted by upper-case and lower-case bold font, respectively. 

\section{system Model and background} \label{sec:2}

\subsection{System Model}
Consider a distributed implementation of the matrix multiplication task described by the equality
\begin{align} \label{product}
\mathbf{Y}=\mathbf{A X},
\end{align}
with the task-specific matrix $\mathbf{A}\in \F_{2^T}^{m \times n}$ and the input data matrix $\mathbf{X}\in \F_{2^T}^{n \times N}$, where each element of matrices $\mathbf{A}$ and $\mathbf{X}$ consists of $T$ bits, and we have the parameters $T, m,n, N\in \mathbb{N}^+$. We use $\xv_i\in\F_{2^T}^{n}$ and $\yv_i\in\F_{2^T}^{m}, i\in[N]$ to denote each column vector of input matrix $\mathbf{X}$ and output matrix $\mathbf{Y}$, respectively. Hence, the matrix product \eqref{product} corresponds to the $N$ linear equations $\yv_i=\mathbf{A}\xv_i$, for $i\in[N]$.

There are $K$ distributed servers, each having a storage device of size $\mu mnT$ bits, with $\mu \in [1/K,1]$. Hence, each server $k$, with $k\in[K]$, can store a number of bits equal to a fraction $\mu$ of the size of matrix $\mathbf{A}$. The lower bound $1/K$ ensures that the entire matrix can be stored across all servers. Specifically, we assume that each server stores up to $m\mu$ row vectors selected from the rows $\mathcal{C}=\{\cv_i\}_{i=1}^{m'}$ of the linearly encoded matrix 
\begin{align} \label{encode}
\mathbf{C}=[\cv_1^T, \cdots,\cv_{m'}^T]^T=\mathbf{G}\mathbf{A},
\end{align} 
where we have defined the encoding matrix $\mathbf{G}\in \F_{2^T}^{m' \times m}$, with integer $m'\geq m$. The rows stored by server $k$ are described by the set $\mathcal{C}_k\subseteq\mathcal{C}$, with $|\mathcal{C}_k|\leq m\mu$. Furthermore, as seen in Fig.~\ref{fig:model}, each server is assumed to have the entire data matrix $\mathbf{X}$ available. Servers have random computations times, and hence they may be straggling. Furthermore, they can communicate to each other over multicast links. 

Following the standard map-reduce paradigm, the computation process consists of three phases, namely map, shuffling and reduce. The goal is to have all columns $\{\yv_i\}_{i=1}^{N}$ available at the end of the reduce phase at the subset of servers that have completed computations within some tolerable computation time. As an example, in Fig.~\ref{fig:model}, only server 1 and 2 are available, while server 3 is straggling, and server $i$ produces subset $\mathcal{R}_i$ of columns $\{\yv_i\}_{i=1}^{N}$, with $\mathcal{R}_1\cup\mathcal{R}_2=\{\yv_i\}_{i=1}^{N}$.

\begin{figure}[t!] 
  \centering
\includegraphics[width=0.5\columnwidth]{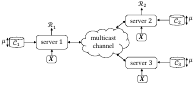}
\caption{Distributed computing platform with $K=3$ servers tasked with producing the matrix product \eqref{product} with minimal latency and inter-server communication load. In this example, server 3 is straggling in the map phase and hence the output is produced by servers 1 and 2.}
\label{fig:model}
\end{figure}

\emph{Map phase:} Each server $k$, with $k\in[K]$, computes the $m\mu$ products 
\begin{align} \label{im}
\mathcal{I}_k=\{\cv\mathbf{X} \in \F_{2^T}^{1 \times N}:\cv \in \mathcal{C}_k\},
\end{align}
for all stored encoded rows $\cv\in\mathcal{C}_k$. The contents of set $\mathcal{I}_k$ are referred to as the IVs available at server $k$ using the map-reduce terminology. We define $D(q)$ as the average time required for the first $q$ servers to complete their computations. The set $\mathcal{Q}\in[K]$, with $|\mathcal{Q}|=q$, of $q$ servers is arbitrary and function $D(q)$ is determined by the distribution of the random computation times of the servers. As a specific example, if each server requires a time distributed as a shifted exponential with minimum value $\mu N$ and average $2 \mu N$, i.e., with cumulative distribution function $F(t)=1-e^{-(t/(\mu N)-1)}$, for all $t \geq \mu N$,
then the computation latency $D(q)$ can be derived as \cite{LMA:16}
\begin{align} \label{dq}
D(q)=\mu N\bigg(1+\sum_{j=K-q+1}^{K}\frac{1}{j}\bigg). 
\end{align}
Note that the average latency $D(q)$ increases with $q$ and with the product $\mu N$. The latter is proportional to the computation load at the server during the map phase, due to the computation of $\mu mN$ inner products in \eqref{im}.

\emph{Shuffling phase:} After the average time $D(q)$, all servers in the non-straggling set $\mathcal{Q}$ coordinate by assigning each server $k$ in $\mathcal{Q}$ a subset of the vectors $\{\yv_i\}_{i=1}^{N}$ to be computed. The indices of the vectors "reduced" at server $k$ are described by the set $\mathcal{R}_k\subseteq[N]$, with $\bigcup_{k\in\mathcal{Q}}\mathcal{R}_k=[N]$. To enable each server $k$ to reconstruct the vectors $\{\yv_i:i\in \mathcal{R}_k\}$, any functions of the computed IVs \eqref{im} can be exchanged among the $q$ servers in subset $\mathcal{Q}$ during the shuffling phase. We use $\mathcal{M}_k$ to denote the sets of functions of the IVs $\mathcal{I}_k$ that each server $k$ multicasts to all other servers in $\mathcal{Q}$ during the shuffling phase.

\emph{Reduce phase:} With the received data $\big\{\mathcal{M}_{k'}:k'\in\mathcal{Q}\backslash \{k\}\big\}$ multicast by the other servers and with the locally computed IVs $\mathcal{I}_k$, each server $k$ in $\mathcal{Q}$ computes the assigned vectors $\{\yv_i:i\in \mathcal{R}_k\}$ in the reduce phase. 

\emph{Performance Criteria:} For a value of the number $q$, of non-straggling servers, the total number of bits exchanged among the servers during the shuffling phase is $\sum_{k\in\mathcal{Q}}|\mathcal{M}_k|$. With normalization by the number of output bits $mT$ for a column, we define the communication load as
\begin{align}
L(q)=\frac{\sum_{k\in\mathcal{Q}}|\mathcal{M}_k|}{mT}.
\end{align} 
Given a computation latency function $D(q)$, e.g. \eqref{dq}, a pair $\big(D(q),L(q)\big)$ is said to be achievable if there exists feasible map, shuffling and reduce policies, i.e., a specific construction for matrix $\mathbf{G}$, communication strategy $\{\mathcal{M}_k:k\in\mathcal{Q}\}$, and reduce task assignment $\{\mathcal{R}_k:k\in\mathcal{Q}\}$, that ensures a correct reconstruction of all columns of \eqref{product} across the $q$ non-straggling servers. Finally, we define the optimal latency-load trade-off curve as
\begin{align}
L^*(q)=\inf\{L(q): \big(D(q),L(q)\big) ~\text{is achievable for sufficiently large}~ m, N, ~\text{and}~T\}. \label{def:to}
\end{align} 

\subsection{Background}

In \cite{LMA:16}, a map-shuffling-reduce coding scheme is introduced that is based on the concatenation of two codes: an MDS code of rate $r_1$ and a repetition code of rate $r_2$. The $(Km/q,m)$ MDS code of rate 
\begin{align} \label{fixedr1}
r_1=\frac{K}{q}
\end{align} 
is used in order to enable decoding from the output of an arbitrary set of $q$ non-straggling servers. To this end, the $K-q$ unavailable outputs from the straggling servers, which have not completed their map computations within average time $D(q)$, are treated as erasures. The repetition code of rate 
\begin{align} \label{fixedr2}
r_2=q\mu
\end{align}
is instead used for the purpose of reducing the inter-server communication load during the shuffling phase. This is done by leveraging coded multicasting based on the available side information at the servers \cite{LAA:15}. The overall process for the coding strategy used in the map phase is illustrated in Fig.~\ref{fig:redundancy}.

\begin{figure}[t!] 
  \centering
\includegraphics[width=0.45\columnwidth]{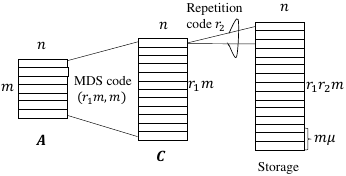}
\caption{Encoding scheme used in the map phase.}
\label{fig:redundancy}
\end{figure}

Since each server can store at most $m\mu$ distinct encoded rows by the storage constraint, the overall storage redundancy across all the $K$ servers with respect to the $m$ rows of matrix $\mathbf{A}$ is given as $Km\mu/m=K\mu$. The map code in Fig.~\ref{fig:redundancy} splits this redundancy between the above two codes as 
\begin{align} \label{con:storage}
K\mu=r_1r_2,
\end{align} 
where $r_1$ and $r_2$ are selected in \cite{LMA:16} as \eqref{fixedr1} and \eqref{fixedr2}, respectively. 

The scheme in \cite{LMA:16} uses the MDS code, of rate \eqref{fixedr1}, solely as a latency-reducing code in order to correct erasures due to stragglers, while the repetition code, of rate \eqref{fixedr2}, is exclusively leveraged to accelerate communications as a bandwidth reducing code for coded multicasting. To elaborate, in \cite{LMA:16}, from the point of view of the MDS code, the effective storage of each server is $\mu_{\mathrm{eff}}=\mu/r_2=1/q$ due to the repetition coding rate $r_2$. This is in the sense that the MDS code effectively "sees" a storage capacity equal to $\mu_{\mathrm{eff}}$, given that the rest of the capacity is used to store repetitions of MDS-encoded rows. As a result, each server effectively stores $m\mu_{\mathrm{eff}}=m/q$ rows, and, with the $\big(K\times(m/q),q\times(m/q)\big)$ MDS code, decoding can be successful as long as $q$ servers complete computing. In fact, the MDS code can correct $K-q$ erasures, each corresponding to the $m/q$ unavailable coded rows stored at each straggling server. 

%



\section{Improved Coding Scheme Based on Concatenated Coding}  \label{sec:3}

%
%

In this section, we propose a policy that is based on the idea of leveraging the overall $(rm,m)$ concatenated code with redundancy $r=r_1r_2$ in Fig.~\ref{fig:redundancy} for the purpose of correcting erasures due to stragglers. In essence, the repetition code of rate $r_2$ is used not only as a bandwidth-reducing code, but it also contributes to the reduction of latency in the map phase. As we will see, this allows us to establish a set of feasible choices of $r_1$ and $r_2$ that contain the selection on \eqref{fixedr1}-\eqref{fixedr2} as special case. 

As a result, the proposed approach allows the section of a rate $r_1\leq K/q$ of the MDS code, which is generally lower than \eqref{fixedr1}, and a storage redundancy $r_2\geq q\mu$ of the repetition code, which is generally larger than \eqref{fixedr2}, while still satisfying the storage constrain $r_1r_2\leq K\mu$ (cf.~\eqref{con:storage}), and the achievability requirement. Therefore, as compared to the algorithm in \cite{LMA:16}, the proposed scheme can potentially operate with a lower rate $r_1$ and a higher rate $r_2$ and is hence potentially able to further speed up the shuffling phase, reducing the communication overhead $L(q)$. 

To proceed, we first present an illustrative example; then, we provide a characterization of the achievable communication load $L(q)$; and, finally, we describe the corresponding general scheme. 


\subsection{Illustrative Example} \label{example}
We now present an example to illustrate the proposed concatenated coding policy. Consider $K=6$ servers, with $q=4$ non-stragglers, storage capacity $\mu=1/2$, as well as the parameters $m=20$ and $N=12$. Parameter $T$ is sufficiently large so as to ensure the existence of an MDS code \cite{LC}. For this example, the scheme in \cite{LMA:16} chooses $r_1=3/2$ and $r_2=2$ according to \eqref{fixedr1}-\eqref{fixedr2}. As shown below, the proposed scheme instead can operate with $r_1=1$ and $r_2=K\mu=3$. Accordingly, the MDS code is not used and the scheme only relies on the repetition code of rate $r_2$ to correct erasures. We will see that the larger value of $r_2$ allows for a reduction of the communication load $L(q=4)$ as compared to the selection in \cite{LMA:16}. 

\emph{Map phase:} Without using an MDS code to encode the rows of matrix $\mathbf{A}$, i.e., with $r_1=1$, we have $\mathbf{C}=\mathbf{A}$ in \eqref{encode}. Each row of matrix $\mathbf{A}$ is then replicated $r_2=3$ times, so that each server $k$ stores $|\mathcal{C}_k|=\mu m=10$ uncoded rows of $\mathbf{A}$. This is done by storing each row in a subset $\mathcal{K}\subseteq[K]$ of three servers, with $|\mathcal{K}|=r_2$. We write $\av_{\mathcal{K}}$ for the row that is stored at all servers in set $\mathcal{K}$ and we have $\mathcal{C}_k=\{\av_{\mathcal{K}}: k\in\mathcal{K}\}$ for the set of rows stored at server $k$. Without loss of generality, we assume that servers 1, 2, 3, 4 are the first $q=4$ servers that complete their computations, i.e., $\mathcal{Q}=\{1,2,3,4\}$. We recall that each server $k$ computes the $|\mathcal{C}_k|=10$ products in the set $\mathcal{I}_k$ in \eqref{im} in the map phase. For the reduce phase, since there are $N=12$ output vectors $\{\yv_i\}_{i=1}^{N}$ in matrix $\mathbf{Y}$, each server $k$ is assigned to output three consecutive vectors, i.e., $\mathcal{R}_k=\{\yv_{3(k-1)+i}=\mathbf{A}\xv_{3(k-1)+i}: i\in[3]\}$.



\emph{Shuffling phase.}
To this end, in the shuffling phase, each server needs to obtain a set of IVs through multicast transmissions. Take server 1 for example. To reduce vector $\yv_1=\mathbf{A}\xv_1 \in \mathcal{R}_1$, server 1 can use the IVs $\{\av\xv_1: \av\in \mathcal{C}_1 \}$ in $\mathcal{I}_1$. Hence, it needs ten extra IVs $\{\av\xv_1: \av \in \mathcal{C}\backslash \mathcal{C}_1 \}$, each of which has been computed by at least one of the servers 2, 3, 4. This is because each row $\av \notin \mathcal{C}_1$ is stored at $r_2=3$ servers that do not include server 1. The same holds for vectors $\yv_2$ and $\yv_3$, and thus server 1 needs 30 IVs in total from servers 2, 3, 4. Similarly, each of the other three servers requires 30 IVs in shuffling phase. We emphasize that, thanks to the repetition code, all the necessary information for the reduce phase is available at any $q$ servers. 

In order to ensure that each server receives the described IVs, we operate separately by multicasting messages within groups of $i+1$ servers in three different phases, which are indexed as $i=3,2,1$, and carried out in this order. This follows the same approach as in \cite{LMA:16}, with the caveat that here we can benefit from a larger multicasting gain (see Remark~\ref{remark1}).

Accordingly, in the first phase, labeled as $i=3$ and illustrated in Fig.~\ref{fig:shuffle}(a), the servers share the needed IVs that are available at subsets of three of the four servers in $\mathcal{Q}$. By construction, one such message exists at each of the four disjoint subsets of three servers in $\mathcal{Q}$. To this end, we perform coded shuffling among the four servers by sending a multicasting message from one server to the other three. As a result of each transmission, each of the receiving server can recover one desired IVs. As illustrated in Fig.~\ref{fig:shuffle}(a), for example, server 1 sends message $\av_{123}\xv_{10}\oplus\av_{124}\xv_7\oplus\av_{134}\xv_4$. Then, server 2 can recover $\av_{134}\xv_4$ by canceling the IVs $\av_{123}\xv_{10}$ and $\av_{124}\xv_7$ that are available in $\mathcal{I}_2$. Similarly, server 3 can recover $\av_{124}\xv_7$ and server 4 can obtain $\av_{123}\xv_{10}$. A similar procedure applies for the other three multicasting messages. At the end of this phase, each server can obtain three IVs.

\begin{figure*}[t!]
    \centering
    \begin{subfigure}
        \centering
        \includegraphics[width=0.4\columnwidth]{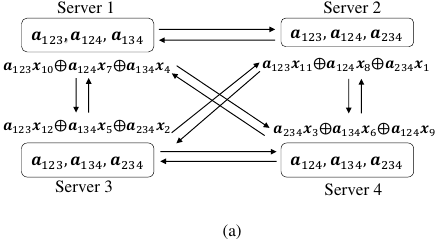}
    \end{subfigure}%
    ~ 
    \begin{subfigure}
        \centering  
        \includegraphics[width=0.45\columnwidth]{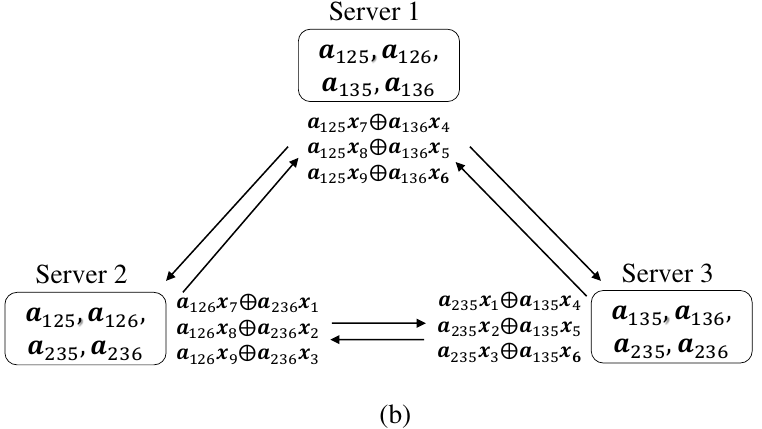} 
    \end{subfigure}
    \caption{Illustration of two phases of shuffling for the example in Section~\ref{example}: (a) Coded shuffling among servers 1, 2, 3 and 4 in the first phase $(i=3)$; (b) Coded shuffling among servers 1, 2, 3 in the second phase $(i=2)$.}
		\label{fig:shuffle}
\end{figure*}


In the second phase, for $i=2$, we deliver the needed IVs that are available at subsets of two of the four server in $\mathcal{Q}$. By construction, there are 72 such IVs, with six available at each of the four subsets of three servers in $\mathcal{Q}$. The four groups of three servers operate in the same way by sending a set of multicasting messages from one server to the other two. Take for example the subset $\{1,2,3\}$ illustrated in Fig.~\ref{fig:shuffle}(b). Server 2 sends the three messages $\{\av_{126}\xv_{6+i}\oplus\av_{236}\xv_{i}, i\in[3]\}$, from which server 1 and 3 obtain $\{\av_{236}\xv_{i}, i\in[3]\}$ and $\{\av_{126}\xv_{6+i}, i\in[3]\}$, respectively, by canceling their own interference with side information.  Server 1 and 3 can perform coded multicasting in a similar manner. As a result, for any subset of three servers, each server can obtain $6$ needed IVs. After this phase, each server can recover 18 IVs in total.

Finally, each server $k$ still needs 9 IVs $\{\av_{i56}\xv_{3(k-1)+j}:i\in[4]\backslash\{k\}, j\in[3]\}$, each of which is computed by only one of the four servers in $\mathcal{Q}$. Hence, in the last phase, labeled as $i=1$, the overall $36$ IVs are shared by means of unicast transmission.

%
%

To sum up, $4+36+36=76$ coded IVs are communicated sequentially in the shuffling phase. This yields a communication load of $L(q=4)=76/m=3.8$, which is smaller than that of $L(q=4)=4.2$ in \cite{LMA:16}.

\begin{remark} \label{remark1}
This example shows that the MDS code can be avoided when $q$ is sufficient large. This is because, thanks to the repetition code, the entire matrix $\mathbf{A}$, and hence the product \eqref{product}, can be recovered by combining the information available at the non-straggling servers when we store uncoded rows from matrix $\mathbf{A}$. As compared to the policy in \cite{LMA:16}, a higher storage redundancy $r_2=K\mu=3$ is obtained for the repetition code. As a result, while the maximum multicasting gain in the shuffling phase for \cite{LMA:16} is limited to 2, i.e., multicast messages can be sent to groups of servers of size at most 2, in the proposed algorithm, the multicasting gain $K\mu=3$ can be reaped, achieving a lower communication load. 
\end{remark}

\subsection{Main result}

In this section, we generalize the scheme introduced in the previous example. As described in Fig.~\ref{fig:redundancy}, in the map phase, we use an $(r_1r_2m, m)$ concatenated code to encode matrix $\mathbf{A}$. The first is an MDS code with rate $r_1=l/q$, for some integer $l\in[q:K]$, and the second is a repetition code, with integer rate $r_2 \leq K\mu/r_1$. Note that the storage constraint $r_1r_2\leq K\mu$ (cf.~\eqref{con:storage}) is satisfied, and hence each server stores at most $m\mu$ distinct encoded rows. In the next proposition, we identify sufficient conditions on the rate pairs $(r_1,r_2)$ to yield a feasible policy.
 
\begin{proposition} \label{Pro:condition}
For storage capacity $\mu\in[1/K,1]$ and number of non-straggling servers $q$ with $ q \in[\lceil 1/\mu \rceil: K]$, sufficient conditions for rates $(r_1,r_2)$ to yield a feasible policy are 
\begin{subequations} \label{condition}
\begin{align}
&~q r_1\in[q:K],~ r_2\in[\lfloor q\mu \rfloor: \lfloor K\mu \rfloor], \label{cons:lim} \\
&~r_1 r_2\leq K\mu, ~\text{and} \label{cons:stor} \\
&\binom{K}{r_2}-\binom{K-q}{r_2}\geq\frac{1}{r_1}\binom{K}{r_2}. \label{cons:numr} 
\end{align}
\end{subequations}
\end{proposition}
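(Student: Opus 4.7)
My plan is to exhibit an explicit concatenated-coding scheme parameterized by $(r_1, r_2)$ and then show that each of the three conditions in \eqref{condition} corresponds to the feasibility of one aspect of the protocol, so that the map, shuffle, and reduce phases can all be completed.

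I would first fix the construction. Apply an $(r_1 m, m)$ MDS code over $\mathbb{F}_{2^T}$ to the rows of $\mathbf{A}$, taking $T$ and $m$ large enough (with $m$ divisible by $\binom{K}{r_2}$) so that the code exists and the counts below are integers, which is permitted by the supremum in \eqref{def:to}. Partition the resulting $r_1 m$ coded rows into $\binom{K}{r_2}$ equal-size groups, one indexed by each $r_2$-subset $\mathcal{K}\subseteq[K]$, each of size $\eta \defeq r_1 m/\binom{K}{r_2}$, and store every row of group $\mathcal{K}$ at each server in $\mathcal{K}$; this realizes the inner repetition code of rate $r_2$.

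The three conditions are then verified one by one. For storage, each server lies in $\binom{K-1}{r_2-1}=(r_2/K)\binom{K}{r_2}$ groups and thus holds $r_1 r_2 m/K$ coded rows, which by condition \eqref{cons:stor} is at most $m\mu$; the integer ranges of $r_1$ and $r_2$ needed for the counts to make sense come from \eqref{cons:lim}. For MDS decodability, note that after the map phase the groups accessible to any non-straggler set $\mathcal{Q}$ with $|\mathcal{Q}|=q$ are exactly those indexed by subsets $\mathcal{K}$ intersecting $\mathcal{Q}$, numbering $\binom{K}{r_2}-\binom{K-q}{r_2}$, so the non-stragglers collectively hold $(\binom{K}{r_2}-\binom{K-q}{r_2})\,\eta$ distinct MDS-coded rows against each column $\xv_i$. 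Condition \eqref{cons:numr} rewrites this count as at least $m$, and the MDS property then guarantees that every output column $\yv_i=\mathbf{A}\xv_i$ can be reconstructed from the pooled IVs at $\mathcal{Q}$.

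It remains to implement the shuffling phase, which I would do by generalizing the multi-level coded-multicasting schedule illustrated in Section~\ref{example}: partition the reduce tasks $\{\yv_i\}_{i=1}^N$ evenly among $\mathcal{Q}$, and for $i = r_2, r_2-1, \ldots, 1$ and each $(i+1)$-subset of $\mathcal{Q}$, transmit XOR-combinations of IVs held by members of that subset, so that each receiver cancels interference using its own locally stored groups and recovers one missing IV. Each accessible IV is computed by at least one server in $\mathcal{Q}$ by construction, and condition \eqref{cons:numr} ensures at least $m$ distinct IVs per $\xv_i$ are available among the non-stragglers, so the final reduce step succeeds at every server in $\mathcal{Q}$; the bound $r_2\leq\lfloor K\mu\rfloor$ in \eqref{cons:lim} moreover keeps the top multicast level within the actual replication factor. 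The main obstacle will be formalizing this schedule at each intermediate level $i$: one must check that a valid assignment of XOR transmitters and receivers exists, that side-information cancellations work across all $(i+1)$-subsets simultaneously, and that the integrality assumptions on $m$ and $N$ can be absorbed by the supremum in \eqref{def:to}. The storage and decodability steps themselves are direct counting arguments, so the bulk of the work is the combinatorial bookkeeping for the shuffling phase.
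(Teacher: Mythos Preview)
Your proposal is correct and follows essentially the same approach as the paper: the same MDS-plus-repetition construction with $r_2$-subset indexing, the same counting argument showing each server stores $r_1 r_2 m/K\le m\mu$ rows for \eqref{cons:stor}, and the same accessibility count $\big(\binom{K}{r_2}-\binom{K-q}{r_2}\big)\eta\ge m$ for \eqref{cons:numr}, followed by the multi-level coded-multicast shuffling. The only minor discrepancy is that the paper runs the multicast phases over $i=s_{\max},\ldots,s_q$ with $s_{\max}=\min\{q-1,r_2\}$ rather than $i=r_2,\ldots,1$, but this refinement matters only for the load in Proposition~\ref{ach}, not for feasibility here.
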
 
\begin{proof}
The proof is presented in Section~\ref{proof}. 
\end{proof}

\begin{remark}
Condition \eqref{cons:lim} defines the domains of rates $r_1$ and $r_2$. Condition \eqref{cons:stor} impose the storage capacity constraint (cf.~\eqref{con:storage}), while condition \eqref{cons:numr} ensures the feasibility of the reconstruction requirement of data matrix $\mathbf{A}$. It can be verified that the choice \eqref{fixedr1} and \eqref{fixedr2} in \cite{LMA:16} satisfies all conditions \eqref{condition} (see Appendix). Furthermore, when $K-q< \lfloor K\mu \rfloor$, a feasible choice is $(r_1=1,r_2=\lfloor K\mu \rfloor)$. With this choice, the MDS code can be avoided, as shown in the example of Section~\ref{example}. 
\end{remark}

Using a feasible pair of rates $(r_1,r_2)$ satisfying \eqref{condition}, the followed communication load is achievable.

\begin{proposition} \label{ach}
For a matrix multiplication task executed by $K$ distributed servers, each having a fractional storage size $\mu\in [1/K,1]$, the following communication load is achievable in the presence of $K-q$ straggling servers with $q \in[\lceil 1/\mu \rceil:K]$
\begin{align} 
\underset{(r_1, r_2)}{\text{min}}~~&\Bigg(L(q)=N\sum_{j=s_q}^{s_{max}} \frac{B_j}{j} +\frac{N(1-\frac{r_1r_2}{K}-\sum_{j=s_q}^{s_{max}}B_j)}{s_q-1} \Bigg)\label{load} 
\end{align} 
where pair $(r_1,r_2)$ is constrained to satisfy the feasible condition \eqref{condition} and we have defined
\begin{align} 
s_{max} \defeq \min\{q-1,r_2\}, B_j\defeq \frac{\binom{q-1}{j}\binom{K-q}{r_2-j}}{\frac{1}{r_1}\binom{K}{r_2}},s_q\defeq \inf\bigg\{s:\sum_{j=s}^{s_{max}}B_j\leq 1-\frac{r_1r_2}{K}\bigg\}.
\end{align}
\end{proposition}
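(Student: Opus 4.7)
The proof plan is to generalize the three-stage map-shuffle-reduce scheme of the Section~\ref{example} example: for every feasible rate pair $(r_1,r_2)$ admitted by Proposition~\ref{Pro:condition}, the enlarged repetition rate $r_2$ creates additional coded-multicasting opportunities, and the shuffling phase exploits them in decreasing order of multicast gain until each server's external demand is met.

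\textbf{Map phase and reduce assignment.} Encode $\mathbf{A}$ with a systematic $(r_1 m, m)$ MDS code into $\mathbf{C}$ with $r_1 m$ rows, and partition these rows into $\binom{K}{r_2}$ equal-size groups, one per $r_2$-subset $\mathcal{K}\subseteq[K]$, of size $\alpha \defeq r_1 m/\binom{K}{r_2}$, storing group $\mathcal{K}$ at every server in $\mathcal{K}$. For $m$ sufficiently large all required divisibilities hold, and each server stores $\alpha\binom{K-1}{r_2-1}=mr_1r_2/K\le m\mu$ rows in accordance with \eqref{cons:stor}. Split $[N]$ into $q$ equal reduce blocks $\mathcal{R}_k$, $k\in\mathcal{Q}$. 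By \eqref{cons:numr} the number of row groups intersecting $\mathcal{Q}$ is $\binom{K}{r_2}-\binom{K-q}{r_2}\ge\tfrac{1}{r_1}\binom{K}{r_2}$, so at least $m$ of the $r_1 m$ coded rows are computable within $\mathcal{Q}$, which is enough to MDS-decode every $\yv_\ell$.

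\textbf{Staged shuffling.} From the viewpoint of server $k\in\mathcal{Q}$, call a row \emph{class-$j$} if its storing subset $\mathcal{K}$ satisfies $k\notin\mathcal{K}$ and $|\mathcal{K}\cap(\mathcal{Q}\setminus\{k\})|=j$; a direct count yields $\binom{q-1}{j}\binom{K-q}{r_2-j}\alpha=B_j m$ class-$j$ external rows per (server, column). Process levels in decreasing order $j=s_{max},s_{max}-1,\ldots,s_q$: at level $j$, for every $(j{+}1)$-subset $\mathcal{S}\subseteq\mathcal{Q}$, the IVs $\{\av\xv_\ell:\mathcal{K}\cap\mathcal{Q}=\mathcal{S}\setminus\{k\},\ \ell\in\mathcal{R}_k\}$ are available at the $j$ servers of $\mathcal{S}\setminus\{k\}$ but not at $k$, for each $k\in\mathcal{S}$. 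Pair these IVs across $\mathcal{S}$ using the symmetric coded-multicasting construction of \cite{LAA:15,LMA:16} and transmit one $T$-bit XOR per pairing; each packet simultaneously delivers $j$ fresh IVs (one to each non-sender). Accounting shows level $j$ clears precisely $B_j m$ class-$j$ IVs per (server, column), contributing $NB_j/j$ to the normalized load and summing to the first term of \eqref{load}.

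\textbf{Residual, decoding, and main obstacle.} By the definition of $s_q$, the per-(server, column) external demand after levels $s_q,\ldots,s_{max}$ equals $m(1-r_1r_2/K-\sum_{j=s_q}^{s_{max}}B_j)\in[0,B_{s_q-1}m)$, which is cleared by a \emph{partial} level-$(s_q-1)$ round that executes only as many $s_q$-subsets of $\mathcal{Q}$ as needed, contributing $N(1-r_1r_2/K-\sum_{j=s_q}^{s_{max}}B_j)/(s_q-1)$ to the load; the convention $0/0=0$ covers the boundary case $s_q=1$, where \eqref{cons:numr} forces the residual to vanish, as in the example of Section~\ref{example}. After both sub-phases, each server $k$ holds $mr_1r_2/K+m(1-r_1r_2/K)=m$ IVs per column $\ell\in\mathcal{R}_k$, which suffices by MDS to recover $\yv_\ell$ whenever $\mathbb{F}_{2^T}$ is large enough to carry an MDS code of the required length. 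The main technical hurdle is the symmetric XOR pairing: at every level $j$ the class-$j$ IVs must partition evenly across $(j{+}1)$-subsets of $\mathcal{Q}$ so that each XOR is formable (one IV per server, known to the other $j$), no IV is delivered twice, and the partial level-$(s_q-1)$ round truncates cleanly — this proceeds by adapting the coded-caching-style symmetric assignment of \cite{LAA:15,LMA:16} to the combinatorial counts $B_j$ arising from the concatenated MDS$+$repetition code.
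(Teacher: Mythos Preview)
Your proposal is correct and follows essentially the same route as the paper's proof in Section~\ref{proof}: the same MDS-plus-repetition placement indexed by $r_2$-subsets $\mathcal{K}$, the same staged coded multicasting in decreasing redundancy order $j=s_{max},\ldots,s_q$, and the same residual delivered by a (partial) round at level $s_q-1$, yielding exactly the two terms in \eqref{load}. The only cosmetic differences are that the paper phrases the partial round as ``each group delivers only $l$ of the available level-$(s_q-1)$ IVs'' rather than ``execute only some $s_q$-subsets,'' and it handles your $s_q=1$ boundary via a separate Case~1 ($s_q=s_{min}$, hence $l=0$) rather than invoking the $0/0=0$ convention---both amount to the same observation that feasibility \eqref{cons:numr} guarantees enough external IVs so the residual vanishes at the bottom level.
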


\begin{proof}
The proof is presented in Section~\ref{proof}. 
\end{proof}
\begin{remark}
Since the solution \eqref{fixedr1} and \eqref{fixedr2} of \cite{LMA:16} is always feasible for the constraints \eqref{condition}, it follows that the achievable communication load $L(q)$ in \eqref{load} is always no larger than that in \cite{LMA:16}. This is because the load \eqref{load} with $r_1$ and $r_2$ in \eqref{fixedr1}-\eqref{fixedr2} is no greater than the load \cite[eq.~(9)]{LMA:16}. To demonstrate that the improvement can be strict, beside the example given in Section~\ref{example}, we provide here a numerical example. 
\end{remark}


\begin{figure}[t!] 
  \centering
\includegraphics[width=0.5\columnwidth]{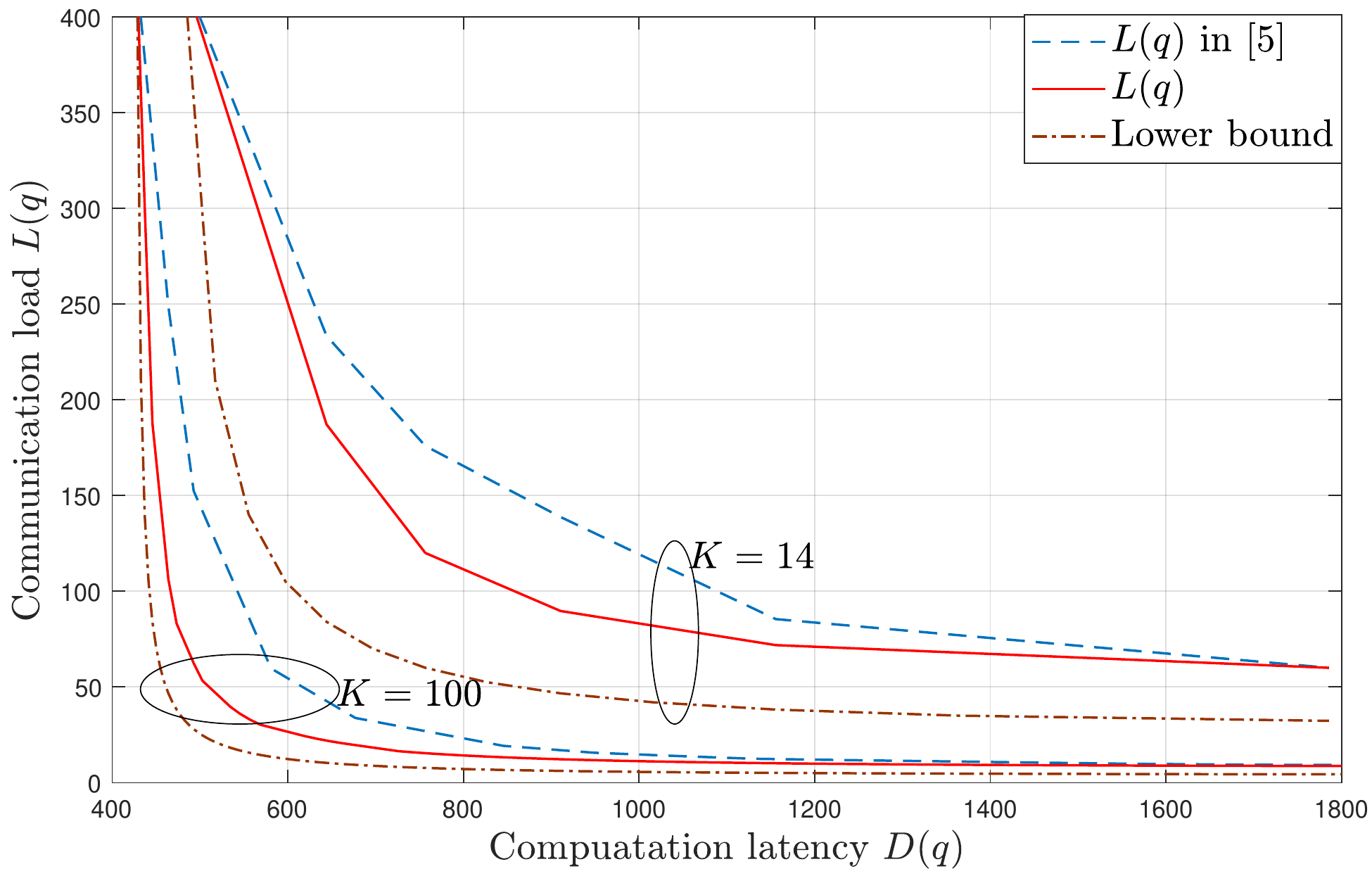}
\caption{Achievable loads $L(q)$ as a function of the computation latency $D(q)$ for the proposed scheme and for the scheme in \cite{LMA:16}, along with lower bound in \cite{LMA:16}, with $N=840,\mu=1/2$ and different values of $K$.}
\label{fig:com}
\end{figure}

A comparison of the achievable communication loads of the proposed scheme and of the scheme in \cite{LMA:16}, as well as the lower bound in \cite{LMA:16}, can be found in Fig.~\ref{fig:com}, where we apply the computation latency $D(q)$ modeled as in \eqref{dq} and we have set $N=840,\mu=1/2$, and different values of $K$. For each value of $K$, at the two end points of small or large computation latency $D(q)$, obtained with $q=\lceil 1/\mu \rceil$ and $q=K$, respectively, the two achievable communication loads coincide. The proposed scheme is seen to bring a positive reduction in communication load, as compared to the algorithm in \cite{LMA:16} for intermediate values of the latency $D(q)$, that is, when there are a moderate number of straggling servers. Furthermore, as the total number $K$ of servers increases, it is observed that the reduction is more significant, and the gap between the achievable load of the proposed scheme and the lower bound becomes smaller. As a numerical example, for $K=100$ and $D(q)=600$, the proposed scheme reduces the communication load by a factor of 2 around, and for $D(q)=500$ by a factor of 2.5.

\subsection{General Scheme and Proofs of Propositions \ref{Pro:condition} and \ref{ach}} \label{proof}

We now generalize the map-shuffling-reduce policy introduced in Section~\ref{example} and we prove that it is feasible for any rate pair $(r_1,r_2)$ that satisfies all conditions $\eqref{condition}$, while achieving a communication load given by $L(q)$ in \eqref{load}. The strategy follows the approach in \cite{LMA:16}, but it relies on a more general choice of values for the rate pair $(r_1,r_2)$.

\emph{Map phase:} Choose values of rates $r_1$ and $r_2$ within their respective domain according to condition \eqref{cons:lim}. As seen in Fig.~\ref{fig:redundancy}, we use an $(r_1m,m)$ MDS code, with a sufficiently large $T$, in order to encode the $m$ rows of matrix $\mathbf{A}$ into $r_1m$ coded rows $\{\cv_i\}_{i=1}^{r_1m}$. Then, we partition all the $r_1m$ encoded rows into $\binom{K}{r_2}$ disjoint subsets, each indexed by a subset $\mathcal{K}\subseteq[K]$ of size $r_2$, i.e., $|\mathcal{K}|=r_2$. Each subset of encoded rows, denoted as $\mathcal{C}_{\mathcal{K}}$, contains $(r_1m)/\binom{K}{r_2}$ encoded rows. Furthermore, for each server $k$, we define the stored subset of encoded rows as $\mathcal{C}_k=\bigcup_{\mathcal{K}: k\in\mathcal{K}} \mathcal{C}_{\mathcal{K}}$ so that a row $\cv$ is included in server $k$ if it belongs to $\mathcal{C}_{\mathcal{K}}$ with $k\in\mathcal{K}$. As a result, each server stores $|\mathcal{C}_k|=|\mathcal{C}_{\mathcal{K}}|\binom{K-1}{r_2-1}=r_1 r_2 m/K$ rows. Note that $m$ should be sufficient large so that $|\mathcal{C}_k|$ is an integer. We also need to impose the inequality $r_1 r_2 m/K\leq m\mu$, in order to satisfy the capacity constraint of each server. This corresponds to inequality \eqref{cons:stor}.

In the map phase, each server $k$, with $k\in[K]$, computes the IVs in set $\mathcal{I}_k$ defined in \eqref{im}. In order to ensure feasibility, i.e., to guarantee the correctness of computation \eqref{product}, we need to make sure that any $q$ servers can reconstruct the entire matrix $\mathbf{A}$. By the erasure-correcting properties of the MDS code, this can be satisfied when at least $m$ distinct rows from $\{\cv_i\}_{i=1}^{r_1m}$ are stored at any $q$ servers. We will see next that, when inequality \eqref{cons:numr} holds, this condition is satisfied by using also the redundancy created by the repetition code. To this end, we distinguish two cases in terms of the choice of rates $r_1$ and $r_2$. 

\emph{Case 1:} For large repetition redundancy, i.e., for $r_2>K-q$, all the $r_1m$ encoded rows of $\cv$ are stored at the $q$ servers in $\mathcal{Q}$. This is because, the number $K-q$ of straggling servers is smaller than the number of repetitions $r_2$ and hence the rows in each subset $\mathcal{C}_{\mathcal{K}}$ are stored by at least $r_2-(K-q)$ servers in $\mathcal{Q}$. As a result, inequality \eqref{cons:numr} holds immediately where we have $\binom{K-q}{r_2}=0$ and $r_1\geq 1$. This means that a large rate $r_2 > K-q$ that satisfies conditions \eqref{cons:lim} and \eqref{cons:stor} can guarantee the reconstruction of $\mathbf{A}$. An instance of this case is the example in Section~\ref{example}. 

\emph{Case 2:} For small repetition redundancy, i.e., for $r_2\leq K-q$, the number of subsets $\{\mathcal{C}_{\mathcal{K}}:\mathcal{K}\subseteq [K]\backslash\mathcal{Q}\}$ that are exclusively stored at the $K-q$ straggling servers in set $[K]\backslash\mathcal{Q}$ is given as $\binom{K-q}{r_2}$, while the other $f=\binom{K}{r_2}-\binom{K-q}{r_2}$ subsets $\{\mathcal{C}_{\mathcal{K}}:\mathcal{K} \cap\mathcal{Q}\neq\varnothing\}$ are stored by at least one of the servers in $\mathcal{Q}$. These subsets include $|\mathcal{C}_{\mathcal{K}}| f$ encoded rows. In order to satisfy the reconstruction requirement, we hence have inequality $|\mathcal{C}_{\mathcal{K}}| f\geq m$, which coincide with condition \eqref{cons:numr}.


\emph{Shuffling phase:}
Each server $k$ in $\mathcal{Q}$ is assigned an arbitrary subset $\mathcal{R}_k$ of $|\mathcal{R}_k|=N/q$ disjoint vectors out of the $N$ vectors $\{\yv_i\}_{i=1}^{N}$. For each assigned vector $\yv_i=A\xv_i$, $i\in\mathcal{R}_k$, server $k$ has only available $|\mathcal{C}_k|$ IVs. Hence, $m-|\mathcal{C}_k|=m(1-r_1r_2/K)$ IVs are needed in order to reduce each vector in $\mathcal{R}_k$, yielding a total number $m(1-r_1r_2/K)N/q$ IVs needed for each server in $\mathcal{Q}$.  


As discussed, the subsets of encoded rows that are stored across the $q$ servers in $\mathcal{Q}$ are described by the set $\{\mathcal{C}_{\mathcal{K}}:|\mathcal{K} \cap\mathcal{Q}|\neq 0\}$. Note that this is also true for Case 1 defined above, since all subsets $\mathcal{K}$ satisfy $|\mathcal{K} \cap\mathcal{Q}|\neq 0$ in this case. For each subset $\mathcal{C}_{\mathcal{K}}$ with $|\mathcal{K} \cap\mathcal{Q}|\neq 0$, the product $\cv\mathbf{X}$ with $\cv\in\mathcal{C}_{\mathcal{K}}$ is available at $|\mathcal{K}\cap\mathcal{Q}|$ of the $q$ servers in $\mathcal{Q}$. By construction, the number of servers in set $\mathcal{K}\cap\mathcal{Q}$ ranges in the interval $[s_{min}: \min\{q,r_2\}]$, where we have defined $s_{min}=\max\{r_2-(K-q),1\}$. This number is referred to as the redundancy of the IVs $\cv \xv_i$, with $\cv\in\mathcal{C}_{\mathcal{K}}$ and any $\xv_i\in [N]$.


%

The coded multicasting approach in \cite{LMA:16} is now performed to deliver $m(1-r_1r_2/K)N/q$ IVs to each server in $\mathcal{Q}$ in the shuffling phase. Following \cite{LMA:16}, there are $s_{max}-s_q+1$ different phases, indexed as $i=s_{max},s_{max}-1,\cdots,s_q$, where we have defined $s_{max}=\min\{q-1,r_2\}$ and $s_q=\inf\{s:\sum_{i=s}^{s_{max}}\binom{q-1}{i}\binom{K-q}{r_2-i}/\big(\frac{1}{r_1}\binom{K}{r_2}\big)=\sum_{i=s}^{s_{max}}B_i\leq 1-r_1r_2/K\}$. In each phase $i$, the multicasting gain is given by $i\in[s_q,s_{max}]$ in the sense that multicast messages are sent to $i$ servers at a time.

For each phase $i=s_{max},s_{max}-1,\cdots,s_q$, a set of multicasting messages are sent within groups of $i+1$ servers in $\mathcal{Q}$, so as to deliver the needed IVs that are available at subsets of $i$ servers in each group. There are $\binom{q}{i+1}$ such subsets of $i+1$ servers in $\mathcal{Q}$, and each group operates in the same way by sending multicasting messages from one server to the other $i$ servers.

Specifically, phase $i$ is used to share IVs with redundancy $i$, i.e., which corresponds to rows in the set $\{\mathcal{C}_{\mathcal{K}}:|\mathcal{K} \cap\mathcal{Q}|=i\}$. In each phase $i$, there are $\binom{q}{i+1}$ groups $\mathcal{S}_i \subseteq[K]$ of $i+1$ non-straggling servers. Each server $k$ in a group $\mathcal{S}_i$ needs to receive IVs $\{\cv\xv_t: \cv\in \{\mathcal{C}_{\mathcal{K}}:\mathcal{S}_i\backslash\{k\}\subseteq\mathcal{K}, |\mathcal{K}\cap \mathcal{Q}|=i\}, t\in\mathcal{R}_{k}\}$ that are available at all $i$ servers in $\mathcal{S}_i\backslash\{k\}$. The number of such IVs is given as $\binom{K-q}{r_2-i}|\mathcal{C}_{\mathcal{K}}||\mathcal{R}_{k}|$, where $\binom{K-q}{r_2-i}$ is the number of subset $\mathcal{C}_{\mathcal{K}}$ and $|\mathcal{C}_{\mathcal{K}}|=(r_1m)/\binom{K}{r_2}$ is the number of encoded rows in each $\mathcal{C}_{\mathcal{K}}$. Each server in $\mathcal{S}_i\backslash\{k\}$ delivers a fraction $1/i$ of the IVs. In this way, each server $k$ delivers $\binom{K-q}{r_2-i}|\mathcal{C}_{\mathcal{K}}||\mathcal{R}_{k}|/i$ required IVs for each server in $\mathcal{S}_i\backslash\{k\}$. To this end, server $k$ creates $\binom{K-q}{r_2-i}|\mathcal{C}_{\mathcal{K}}||\mathcal{R}_{k}|/i$ multicasting messages by summing $i$ IVs, one for each server in $\mathcal{S}_i\backslash\{k\}$.

Based on the discussion above, the total number of delivered multicasting messages is given as $\binom{q}{i+1}(i+1)\binom{K-q}{r_2-q}|\mathcal{C}_{\mathcal{K}}||\mathcal{R}_{k}|/i$. Furthermore, each server $k$ in set $\mathcal{Q}$ can recover $\binom{q-1}{i}\binom{K-q}{r_2-q}|\mathcal{C}_{\mathcal{K}}||\mathcal{R}_{k}|$ IVs, since each server is included in $\binom{q-1}{i}$ groups $\mathcal{S}_i$. Combining all the $s_{max}-s_q+1$ phases, the communication load is given as  
\begin{align}
L_1=\sum_{i=s_q}^{s_{max}}\frac{\binom{q}{i+1}(i+1)\binom{K-q}{r_2-q}|\mathcal{C}_{\mathcal{K}}||\mathcal{R}_{k}|}{im}=\sum_{i=s_q}^{s_{max}}N\frac{\binom{q-1}{i}\binom{K-q}{r_2-i}}{i\frac{1}{r_1}\binom{K}{r_2}}=\sum_{i=s_q}^{s_{max}}N\frac{B_i}{i},
\end{align}
and the total number of IVs each server $k$ receives is given as 
\begin{align} \label{eq:n}
z=\sum_{i=s_q}^{s_{max}}\binom{q-1}{i}\binom{K-q}{r_2-q}|\mathcal{C}_{\mathcal{K}}||\mathcal{R}_{k}|=\sum_{i=s_q}^{s_{max}} \frac{m\binom{q-1}{i}\binom{K-q}{r_2-i}}{\binom{K}{r_2}\frac{1}{r_1}}\frac{N}{q}=\sum_{i=s_q}^{s_{max}}  B_i\frac{mN}{q},
\end{align}
where we have the inequality $z\leq m(1-r_1r_2/K)N/q$ due to the definition of $s_q$. Hence, at the end of the $s_{max}-s_q+1$ phases, each server in $\mathcal{Q}$ still needs $l=m(1-r_1r_2/K)N/q-z$ IVs, which are delivered by considering two cases. 
 
\emph{Case 1:} $s_q=s_{min}$. As discussed, for each server $k$, the IVs in set $\big\{\cv\xv_t: \cv\in \{\mathcal{C}_{\mathcal{K}}:|\mathcal{K}\cap\mathcal{Q}\backslash\{k\}|\neq 0\}, t\in\mathcal{R}_k \big\}$ that are available across the other servers in $\mathcal{Q}\backslash\{k\}$ has cardinality $\sum_{i=s_{min}}^{s_{max}} B_imN/q$. This is obtained by summing up the IVs in the set with each redundancy $|\mathcal{K}\cap\mathcal{Q}\backslash\{k\}|$, ranging in the interval $[s_{min}:s_{max}]$. Since each server needs $m(1-r_1r_2/K)N/q$ IVs, the inequality $\sum_{i=s_{min}}^{s_{max}} B_imN/q \geq m(1-r_1r_2/K)N/q$ holds immediately. Combining with \eqref{eq:n} and the inequality $z\leq m(1-r_1r_2/K)N/q$, when $s_q=s_{min}$, we have the equality $z=m(1-r_1r_2/K)N/q$, i.e., $l=0$. This implies that the computed IVs for each server across the $q$ servers is $mN/q$, i.e., the number of distinct encoded rows stored across the $q$ servers is exactly $m$. As a result, the communication is finished with the load $L(q)=L_1$ for this case.

\emph{Case 2:} $s_q>s_{min}$. The $l$ IVs are sent by considering an additional phase indexed as $i=s_q-1$. Due to the definition of $s_q$, we have the inequality $\sum_{j=s_q-1}^{s_{max}}  B_j\geq (1-r_1r_2/K)$. This implies that 
the number of IVs in the set $\{\cv\xv_t: \cv\in \{\mathcal{C}_{\mathcal{K}}:\mathcal{S}_i\backslash\{k\}\subseteq\mathcal{K}, |\mathcal{K}\cap \mathcal{Q}|=i\}, t\in\mathcal{R}_{k}\}$ is larger than $l$, and hence only $l$ from them need to be received. Splitting this number equally into $i$ fractions, coded multicasting with the gain $s_q-1$ is performed in the same way as discussed above. Thus, the overall communication load is given as 
\begin{align}
L(q)=L_1+\frac{lq}{m(s_q-1)}=L_1+ \frac{mN(1-r_1r_2/K)-nq}{m(s_q-1)}.
\end{align}

\section{concluding Remarks} \label{sec:4}

  
In this paper, we have studied the latency-communication trade-off in a map-shuffle-reduce network in the presence of straggling servers. The key observation is that the redundancy of the repetition code in the scheme proposed in \cite{LMA:16} can be used not only to accelerate communications in the shuffling phase, but also to correct erasures caused by the straggling servers. This approach was shown to improve the latency-communication trade-off derived in \cite{LMA:16}.

\begin{appendix} \label{prove}
We now prove that $r_1=K/q$ and $r_2=\lfloor q\mu \rfloor$ is a feasible solution that satisfies conditions $\eqref{condition}$. The first two conditions are immediately verified. For condition $\eqref{cons:numr}$, we have
\begin{align}
\frac{\binom{K-q}{r_2}}{\binom{K}{r_2}}=\frac{\prod_{j=1}^{q}(K-q-r_2+j)}{\prod_{j=1}^{q}(K-q+j) }\stackrel{(a)}{=}\frac{\prod_{j=1}^{r_2}(K-q-r_2+j)}{\prod_{j=1}^{r_2}(K-r_2+j) }\stackrel{(b)}{\leq} \frac{K-q}{K},
\end{align}
where equality $(a)$ holds by removing only the common elements in numerator and denominator, and inequality $(b)$ holds because for any $j\in[r_2]$, we have the inequality $K-q-r_2+j \leq K-r_2+j$ and we have $(K-q)\leq K$ for $j=r_2$. Hence, the inequality $\binom{K}{r_2}-\binom{K-q}{r_2}\geq (q/K)\binom{K}{r_2}=(1/r_1)\binom{K}{r_2}$ holds, implying that condition $\eqref{cons:numr}$ is satisfied. 
\end{appendix}

\section*{Acknowledgements}
Jingjing Zhang and Osvaldo Simeone have received funding from the European Research Council (ERC) under the European Union's Horizon 2020 Research and Innovation Programme (Grant Agreement No. 725731).

\bibliographystyle{IEEEtran}
\bibliography{IEEEabrv,final_refs}

\end{document}